    \newtheorem{theorem}{Theorem}
    \newtheorem{lemma}[theorem]{Lemma}
    	\definecolor{darkgreen}{rgb}{0.01, 0.93, 0.29}
\definecolor{lightbrown}{rgb}{0.91, 0.4, 0.11}
\title{StreamTable: An Area Proportional Visualization for  Tables with Flowing Streams\footnote{Work is supported in part by  the Natural Sciences and Engineering Research Council of Canada (NSERC), and by two CFREF grants coordinated by GIFS and GIWS.}}
\author{Jared Espenant}
\author{Debajyoti Mondal}
\affil{Department of Computer Science\\
University of Saskatchewan, Saskatoon, Saskatchewan, Canada \\
  \texttt{jae608@usask.ca, d.mondal@usask.ca}}
\begin{document}

\maketitle

\begin{abstract}
Let $M$ be a two-dimensional  table with  each cell weighted by a nonzero positive number. A StreamTable visualization of $M$ represents the columns as non-overlapping vertical streams and the rows as horizontal stripes such that the intersection between a stream and a stripe is a rectangle with area equal to the weight of the corresponding cell.  To avoid large wiggle of the streams, it is desirable to keep the consecutive cells in a stream to be adjacent. Let $B$ be the smallest axis-aligned bounding box containing the StreamTable. Then the difference between the area of $B$ and the sum of the weights is referred to as the excess area. We attempt to optimize various StreamTable aesthetics (e.g., minimizing excess area, or maximizing  cell adjacencies in streams).
\begin{itemize}
    \item If the row  permutation is fixed and the row heights are given, then we give an $O(rc)$-time algorithm  
    to optimizes these aesthetics, where $r$ and $c$ are the number of rows and columns, respectively.  
    \item If the row permutation is fixed but the row heights can be chosen, then we discuss a technique to compute an aesthetic (but not necessarily optimal) StreamTable by solving a  quadratically-constrained quadratic program, 
    followed by iterative improvements. 
    If the row heights are restricted to be integers, then we prove the problem to be NP-hard. 
    \item If the row  permutations can be chosen, then we show that it is NP-hard to find a row permutation that optimizes  the area or adjacency aesthetics.
\end{itemize}
\end{abstract}

\section{Introduction}
Proportional area charts and cartographic visualizations commonly represent data values as geometric objects. Table cartogram~\cite{DBLP:journals/comgeo/EvansFKKMNV18} is a brilliant way to  
visualize tables as cartograms, where each table cell is mapped to a convex quadrilateral with area equal to the cell's weight. Furthermore, the visualization preserves cell adjacencies and the quadrilaterals are packed together in a rectangle with no empty space in between (e.g., see Figure~\ref{fig:intro}(e)). However, since the cells in a table cartogram are represented with convex quadrilaterals, it may sometimes become difficult to follow the rows and columns~\cite{rakib}. This motivated us to look for a solution, where each row is  represented with a horizontal \emph{stripe} (i.e., a region bounded by two horizontal lines) and the cells in each row are  represented with axis aligned rectangles inside the corresponding stripe.

Streamgraphs are examples where the columns can be thought of as vertical stripes. Given a set of variables, a \emph{streamgraph} visualizes how their values change over time by representing each variable with a   flowing river-like stream (e.g., an  $x$-monotone polygon). The width of the stream at a timestamp is determined by the value of the variable at that time. Figure~\ref{fig:intro}(a) illustrates a streamgraph with five variables. Streamgraphs are often used to create infographics of temporal data~\cite{DBLP:journals/tvcg/ByronW08}, e.g., box office revenues for movies~\cite{DBLP:journals/cgf/BartolomeoH16}, various statistics or demographics of a population over time~\cite{DBLP:journals/tvcg/HavreHWN02}, etc. 

In this paper, we introduce StreamTable that extends this idea of a streamgraph to visualize   tables or spreadsheets. We now  formally define a StreamTable.

\subsection{StreamTable}
Let $T$ be an $r \times c$ table with $r$ rows and $c$ columns, where each cell is weighted by a nonzero positive number. A \emph{StreamTable} visualization of $T$ is a partition of an axis-aligned rectangle $R$ into $r$ consecutive horizontal stripes that represent the rows of $T$, where each stripe is further divided into  rectangles to represent the cells of its corresponding row. A column $q$ of $T$ is thus  represented by a sequence of rectangles  corresponding to the cells of $q$. By a \emph{stream} we refer to such a sequence of rectangles that represents a column of $T$. Furthermore, a StreamTable must satisfy the  following properties.
\begin{figure}[pt]
  \centering
  \includegraphics[width=\textwidth]{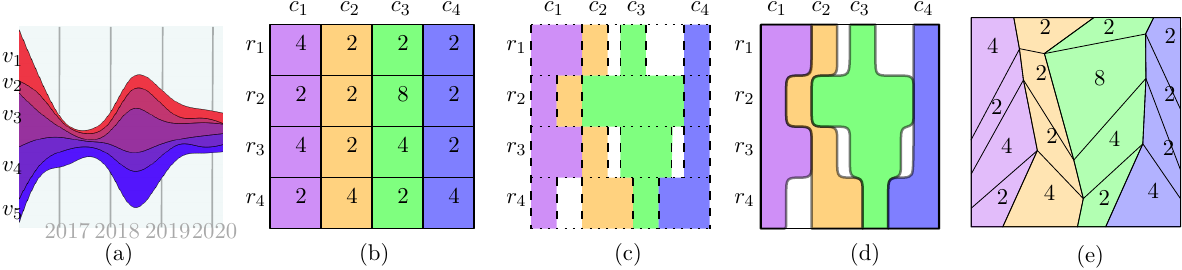}
  \caption{(a) A streamgraph. (b) A table $T$. (c) A StreamTable for $T$. (d) A StreamTable visualization with smooth streams. (e) A table cartogram for $T$.}
  \label{fig:intro}
\end{figure}
\begin{enumerate}
    \item[$P_1.$] The left side of the leftmost stream (resp., the right side of the rightmost stream) must be aligned to the left side (resp., right side) of $R$.
    \item[$P_2.$] For each cell of $T$, the area of its corresponding rectangle in the StreamTable must be equal to the cell's weight. 
\end{enumerate}
Property $P_1$ ensures an aesthetic alignment with the row labels and provides a sense of total visualization area. Property $P_2$ provides an area proportional representation of the table cells. 
Figure~\ref{fig:intro}(b) illustrates a table and Figure~\ref{fig:intro}(c) illustrates a corresponding StreamTable. The stripes (rows) are shown in dotted lines and the partition of the stripes are shown in dashed lines. Figure~\ref{fig:intro}(d) illustrates an aesthetic visualization of the streams 
after smoothing the corners.

Note that a StreamTable may contain rectangular regions that do not correspond to any cell. We refer to such regions as \emph{empty regions} and the sum of the area of all empty regions as the \emph{excess area}. While computing a StreamTable, a natural optimization criterion is to minimize this excess area. However, minimizing excess area may sometimes result into disconnected streams. Figure~\ref{fig:intro2}(b) illustrates a StreamTable with zero excess area, where the consecutive rectangles for column $c_2$ are not adjacent (i.e., no two consecutive rectangles of $c_2$ share a common boundary point).  If a pair of cells are consecutive in a column but the corresponding rectangles are nonadjacent in the stream, then they \emph{split} the stream. To maintain the stream connectedness, it is desirable to minimize the number of such splits. As illustrated in Figure~\ref{fig:intro2}(c)-(d), one may choose non-uniform row heights or reorder the rows to optimize the aesthetics. Such reordering operations also appear in matrix reordering problems~\cite{DBLP:conf/diagrams/MakinenS00} where the goal is to reveal clusters in matrix data. StreamTable computation also relates to  floorplanning~\cite{DBLP:conf/ispd/ChenF98,DBLP:journals/mmor/Rosenberg89} and   area-universal rectangular layout problems~\cite{DBLP:journals/siamcomp/EppsteinMSV12,DBLP:journals/jocg/BuchinELNS16}, where the horizontal adjacencies are not mandatory but vertical adjacencies must be preserved. 

\subsection{Our Contribution} We explore StreamTable from a theoretical perspective and consider the following two problems. 

\begin{figure}[pt]
  \centering
  \includegraphics[width=.9\textwidth]{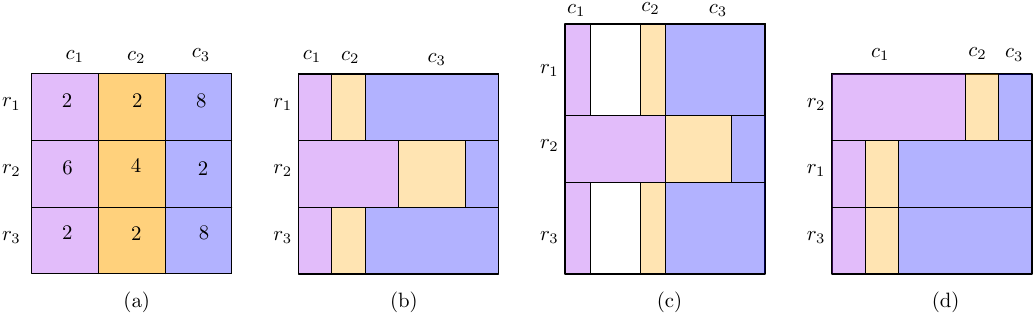}
  \caption{(a) A table.  (b) A StreamTable with no excess area and 2 splits. (c) A StreamTable with non-uniform row heights, non-zero excess area, but no split. (d) A StreamTable with no excess area and 1 split (obtained by reordering rows).}
  \label{fig:intro2}
\end{figure}


 \smallskip
\noindent\fbox{%
    \parbox{.98\columnwidth}{%
\textbf{Problem 1 (StreamTable with no Split, Minimum Excess Area, and Fixed Row Ordering).} Given an $r\times c$ table $T$, can we compute a StreamTable for $T$ in polynomial time with no split and minimum excess area? Note that in this problem, the StreamTable must respect the row ordering of $T$. }}

\smallskip
 If the row heights are restricted to be integers, then we show the problem to be NP-hard. In general, the problem can be modeled leveraging a quadratically-constrained quadratic program, and a solution  computed by non-linear programming solver may be iteratively improved by adjusting the row heights. However, this only provides a heuristic solution. While Problem 1 remains open, if the input additionally specifies a set $\{h_1,\ldots,h_r\}$ of  nonzero positive numbers  to be chosen as row heights, then we can compute a StreamTable with minimum excess area in $O(rc)$ time.    
 Since choosing a fixed row height helps to obtain a fast algorithm and to  compare the cell areas more accurately, we examined whether one can leverage the row ordering to further improve the StreamTable aesthetics.  

\smallskip
\noindent\fbox{%
    \parbox{.98\columnwidth}{%
\textbf{Problem 2 (Row-Permutable StreamTable with Uniform Row Heights).} Given a table $T$ and a non-zero positive number $\delta>0$, can we compute a StreamTable in polynomial time by setting $\delta$ as the row height, and minimizing the excess area (or, the number of splits)? Note that in this problem, the row ordering can be chosen.}}

\smallskip
We show that Problem 2 is NP-hard, i.e., we show that computing a StreamTable with no excess area and minimum number of split is NP-hard and similarly, computing a StreamTable with no split and minimum excess area is NP-hard.

\section{StreamTable (No Split, Min. Excess Area,  Fixed Row Order)}
In this section we compute StreamTables by respecting the row ordering of the input table. We first explore the case  when the row heights are given, and then the case when the row heights can be chosen.

\subsection{Fixed Row Heights}
\label{sec:lp}
Let $T$ be an $r\times c$ table and let $\{h_1,\ldots,h_r\}$ be a set of nonzero positive numbers to be chosen as row heights. We now introduce some notation for the rectangles and streams in the StreamTable. Let  $w_{i,j}$ be the weight for the $(i,j)$th entry of $T$, where $1\le i\le r$ and $1\le j\le c$, and let $R_{i,j}$ be the rectangle with height $h_i$ and width $(w_{i,j}/h_{i,j})$. Let $a_{i,j}$ and $b_{i,j}$ be the $x$-coordinates of the left and right side $R_{i,j}$.

We now show that a StreamTable $\mathcal{R}$ for $T$ with no split and minimum excess area can be constructed using a greedy algorithm $\mathcal{G}$, as follows:

\begin{enumerate}
    \item[]\textbf{Step 1.} Draw the rectangles $R_{i,1}$ of the first column such that they are left aligned. 
    \item[]\textbf{Step 2.} For each $j<c$, draw the $j$th stream by minimizing the sum of  $x$-coordinates $a_{i,j}$, but ensuring that the stream remains connected. 
    \item[]\textbf{Step 3.} Draw  the rectangles $R_{i,c}$ of the last column by minimizing the maximum $x$-coordinate over $b_{i,c}$, but ensuring that the rectangles are right aligned.  
\end{enumerate} 
 
For every column  $j$, let $A(\mathcal{R},j)$  be the orthogonal polygonal chain determined by the left  side of $R_{i,j}$. Similarly, we define (resp., $B(\mathcal{R},j)$) for the right side of $R_{i,j}$. We now have the following lemma.

\begin{lemma}
\label{lem:t}
 $\mathcal{G}$ computes a StreamTable $\mathcal{R}$ with no split and minimum excess area. 
\end{lemma}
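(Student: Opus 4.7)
The plan is to first reduce the excess-area problem to a width-minimization problem. Since the row heights $h_i$ are fixed, the bounding box $R$ has fixed height $H=\sum_i h_i$, so the excess area equals $W\cdot H-\sum_{i,j}w_{i,j}$, where $W$ is the width of $R$; minimizing the excess area is therefore equivalent to minimizing $W$. Property $P_1$ lets me normalize the left edge of $R$ to $x=0$, so $a_{i,1}=0$ in every valid StreamTable, and it forces $b_{i,c}=W$ for all $i$ in the last column. The non-overlap condition $a_{i,c}=W-w_{i,c}/h_i\ge b_{i,c-1}$ then gives $W=\max_i(b_{i,c-1}+w_{i,c}/h_i)$ for the output of $\mathcal{G}$; the last column is automatically split-free because all of its rectangles share the right edge $x=W$ and thus pairwise overlap in a non-degenerate vertical segment.

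The heart of the argument is to prove by induction on $j$ that the greedy right-boundary vector $(b_{1,j}^G,\ldots,b_{r,j}^G)$ is componentwise at most that of any valid no-split StreamTable $V$. The base case $j=1$ is immediate from $P_1$. For the inductive step, I would assume $b_{i,j-1}^G\le b_{i,j-1}^V$ for every $i$. In $V$, the left-coordinate vector $(a_{i,j}^V)$ must satisfy $a_{i,j}\ge b_{i,j-1}^V\ge b_{i,j-1}^G$ together with the stream-connectedness constraints $a_{i,j}+w_{i,j}/h_i\ge a_{i+1,j}$ and $a_{i+1,j}+w_{i+1,j}/h_{i+1}\ge a_{i,j}$. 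The key structural fact is that this feasible set is closed under componentwise minimum: if $(a_i)$ and $(a_i')$ are feasible, then so is $(\min(a_i,a_i'))$, because $\min(a_i,a_i')+w_{i,j}/h_i=\min(a_i+w_{i,j}/h_i,\,a_i'+w_{i,j}/h_i)$ dominates $\min(a_{i+1},a_{i+1}')$ (and symmetrically), while the lower bounds are inherited trivially. Hence the feasible set has a unique componentwise-smallest element, and any strictly coordinatewise-monotone objective---including the coordinate sum minimized by Step 2---must select it. Therefore $a_{i,j}^G\le a_{i,j}^V$ for all $i$, whence $b_{i,j}^G\le b_{i,j}^V$, closing the induction. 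Applying this at $j=c-1$ and plugging into the width formula from the first paragraph yields $W^G\le W^V$, so $\mathcal{G}$ attains the minimum excess area over all no-split StreamTables.

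The main obstacle I anticipate is a clean justification of the closure-under-componentwise-min property together with the claim that the sum-minimizer in Step 2 actually coincides with the componentwise minimizer; the algebraic manipulation sketched above is the crux, and some care is needed because the overlap constraints are bilateral. A secondary technical point is to confirm that the output of $\mathcal{G}$ is really a StreamTable in the formal sense: every cell has the prescribed area by construction, the no-split and in-row non-overlap conditions are precisely the constraints enforced in Step 2, and all rectangles lie within $R$ because Step 3 pins the last column flush against the right edge of $R$. With these pieces in place, the inductive optimality argument delivers the lemma.
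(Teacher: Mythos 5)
Your proof is correct and follows essentially the same strategy as the paper: a column-by-column induction showing the greedy boundary is componentwise dominated by that of any valid no-split StreamTable, with the key step being that the feasible set of left coordinates is closed under componentwise minimum (your algebraic formulation of what the paper phrases as taking the "left envelope" of two chains). Your version is in fact somewhat more careful than the paper's, since you make explicit both the reduction of excess area to width minimization and the link between sum-minimization and componentwise minimality.
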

\begin{proof}
We employ an induction on the number of columns. For an $r\times c$ table $T$ with $c=2$, it is straightforward to verify the lemma. We now assume that the lemma holds for every table with $j$ columns where $1\le j<c$. Consider now a table with $c$ columns and let $\mathcal{R}^*$ be an optimal  StreamTable with no split and minimum excess area. 

We first show that the first two streams  of $\mathcal{R}^*$ can be replaced with the corresponding streams of $\mathcal{R}$. To observe this first note that the stream for the first column must be drawn left-aligned, and since the rectangle heights are given, the right side of the streams $B(\mathcal{R},1)$ must coincide with $B(\mathcal{R}^*,1)$. Consider now the left sides of the  second streams. If $A(\mathcal{R},2)$ does not coincide with  $A(\mathcal{R}^*,2)$, then there must be non-zero area between them. Let $A$ be an orthogonal polygonal chain constructed by taking the left envelope of these two chains. In other words, for each row, we choose the part of the chain that have the  minimum $x$-coordinate. Since the streams for $\mathcal{R}$ and $\mathcal{R}^*$ are connected, the stream determined by $A$ must be connected. Since the sum of $x$-coordinates is smaller for $A$, the polygonal chain $A(\mathcal{R},2)$ must coincide with $A$. Thus the right side of the stream, i.e., the polygonal chain  $B(\mathcal{R},2)$, must remain to the left of $B(\mathcal{R}^*,2)$. 

We can now construct an $r\times (c-1)$ table $T'$ by treating the polygonal chain  $B(\mathcal{R},2)$ as $B(\mathcal{R},1)$. By induction, $\mathcal{G}$ provides a StreamTable $\mathcal{R}'$ with no split and minimum excess area. We can thus obtain the StreamTable $\mathcal{R}$ by replacing the first stream with the two streams that were constructed using the greedy approach. 
\end{proof}
We now have the following theorem. 
\begin{theorem}
Given an $r\times c$ table $T$ and a height for each row, a StreamTable $\mathcal{R}$ for $T$ with no split and minimum excess area can be computed in $O(rc)$ time such that $\mathcal{R}$ respects the row ordering of $T$.  
\end{theorem}
\begin{proof}
By Lemma~\ref{lem:t}, it suffices to show that $\mathcal{G}$ takes $O(rc)$ time. It is straightforward to compute the drawing of the first column (i.e., \textbf{Step 1}) in $O(r)$ time. To compute \textbf{Step 2}, for each column $j$, we first compute two sequence of rectangles $R_t$ and $R_b$, and then process them to find the stream for the $j$th column. 

We construct $R_t$ from top-to-bottom (e.g., see Figure~\ref{fig:z2}), and refer to a placement of the rectangle $R(i,j)$ in $R_t$ as $R_t(i,j)$. We first place $R_t(1,j)$ starting at $b(1,j-1)$, and then greedily place $R_t(i,j)$, where $i>1$, such that each rectangle lies as much to the left as possible maintaining the adjacency with the previously placed rectangle. The adjacency between subsequent rectangles  is  broken when the $b(i,j-1)$ is larger than the $x$-coordinate of the right side of the previously placed rectangle $R_t(i-1,j)$. We construct $R_b$ from bottom-to-top symmetrically. It is straightforward to compute $R_t$ and $R_b$ in $O(r)$ time. We now construct the stream  of the $j$th column by taking for each $i$, the rectangle with the larger starting $x$-coordinate among  $R_t(i,j)$ or $R_b(i,j)$. This also takes $O(r)$ time.

Let $R(i,j)$ be a rectangle in $\mathcal{R}$. We call \emph{$R(i-1,j)$ a  parent of  $R(i,j)$} if $a(i-1,j) = b(i,j)$. Similarly, we call \emph{$R(i+1,j)$ a  parent of  $R(i,j)$} if $a(i+1,j) = b(i,j)$.  By the greedy construction, every rectangle $R(i,j)$ must have a parent unless it starts at $b(i,j-1)$. If $R(i,j)$ does not have a parent, then we call it a \emph{root rectangle}.

\begin{figure}[h]
    \centering
    \includegraphics[width=.8\textwidth]{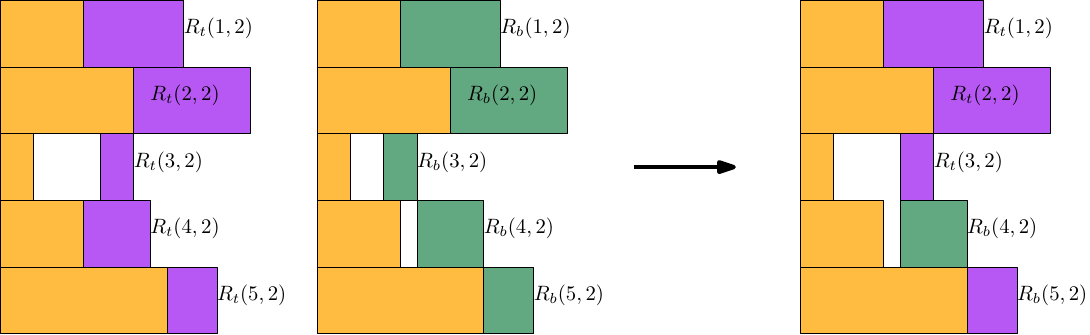}
    \caption{The computation of $R_t$ (purple) and $R_b$ (green), and the resulting second stream.}
    \label{fig:z2}
\end{figure}

Let $A(\mathcal{R},j)$ be the left side of the $j$th stream. We now show that $A(\mathcal{R},j)$ determines a connected stream and minimizes the sum of $x$-coordinates. 

\textbf{Connectedness:} Suppose for a contradiction that $R(i-1,j)$ and $R(i,j)$ are not connected, and without loss of generality assume that $R(i-1,j)$ comes from $R_t$, i.e., $R(i-1,j)= R_t(i-1,j)$. 

If $R(i,j) = R_t(i,j)$, then $b(i,j-1)$ is larger than the $x$-coordinate of the right side of $R_t(i-1,j)$. Therefore, the starting $x$-coordinate of $R_b(i,j)$ must be larger than $b(i,j-1)$ and hence  the starting $x$-coordinate of $R_b(i-1,j)$ must be larger than that of $R_t(i-1,j)$, which contradicts the assumption that  $R(i-1,j)$ comes from $R_t$.  

If $R(i,j) = R_b(i,j)$, then the starting $x$-coordinate of $R_b(i,j)$ is larger than that of $R_t(i,j)$. Furthermore, since $R(i,j)$ and $R(i-1,j)$ are not adjacent,  the starting $x$-coordinate of $R_b(i,j)$ must be larger than the right side of $R_t(i-1,j)$. Consequently, the starting $x$-coordinate of $R_b(i-1,j)$ must be larger than that of $R_t(i-1,j)$, which contradicts the assumption that  $R(i-1,j)$ comes from $R_t$.  

\textbf{Minimization:} Suppose for a contradiction that there exists a no-split  drawing $\mathcal{R'}$ for the $j$th stream such that the sum of $x$-coordinates of $A(\mathcal{R'},j)$ is smaller than that of $A(\mathcal{R},j)$. Then there must exist  a rectangle $R(i,j)$ in $R$ such that the starting $x$-coordinate of $R(i,j)$ is larger than the corresponding rectangle $R'(i,j)$ in $\mathcal{R'}$. In the following we show that such a scenario cannot exist.

If $R(i,j)$ is a root rectangle, then the starting $x$-coordinate of $R'(i,j)$ must coincide with $R(i,j)$. Hence we may assume that $R(i,j)$ has a parent. We follow the parent repeatedly until we reach a root  $R(q,j)$. Since $R(q,j)$ is a root rectangle, its starting $x$-coordinate is determined by $b(q,j-1)$. Furthermore,  the starting $x$-coordinate of $R'(q,j)$ cannot be smaller than $b(q,j-1)$. The difference   between the starting $x$-coordinates of  $R(i,j)$ and $R(q_j)$ is exactly the sum of the width of the rectangles $R(i,j),\ldots, R(q_j)$. Hence, starting $R'(i,j)$ earlier than $R(i,j)$ will split the $j$th stream in $\mathcal{R'}$, which contradicts the assumption that $\mathcal{R'}$ is a no-split drawing.

\begin{figure}[pt]
    \centering
    \includegraphics[width=.8\textwidth]{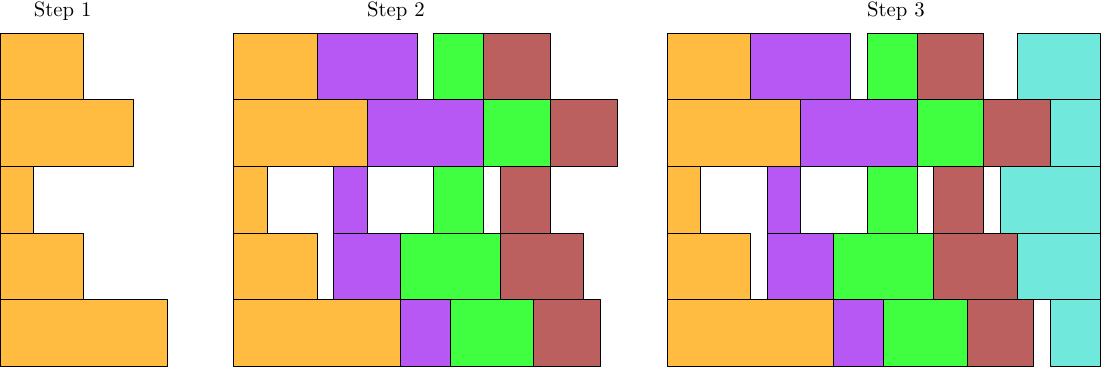}
    \caption{A simple run of the greedy algorithm, computing a no split minimum excess area StreamTable with 5 rows and columns.}
    \label{fig:z3}
\end{figure}
Finally, it is straightforward compute \textbf{Step 3}   in $O(r)$ time by first following \textbf{Step 2} and then moving the rectangles rightward to make them right aligned (e.g., see Figure~\ref{fig:z3}).  
\end{proof} 

We now show how to formulate a system of linear equations to compute a StreamTable for $T$ with no split and minimum excess area such that the height of the $i$th row is set to $h_i$,  where $1\le i\le r$.  This will be useful for the subsequent section. 
Let $d_{i,j}$ be a variable to model the adjacency between $R_{i,j}$ and $R_{i+1,j}$, where $1\le i\le r-1$ and $1\le j\le c$. We minimize the excess area: $\sum\limits_{j=1}^{r}\sum\limits_{k=1}^{c-1} h_{j}(a_{j,k+1}-b_{j,k})$, subject to the following constraints.
\begin{enumerate}
    \item $a_{j,1}=a_{j+1,1}$ and $b_{j,c}=b_{j+1,c}$, where $j=1, \ldots,r-1$. This ensures StreamTable property $P_1$.
    \item $b_{j,k}-a_{j,k} = (w_{j,k}/h_j)$, where $j=1, \ldots,r$ and $k=1, \ldots,c$. This ensures 
    property $P_2$.
    \item $a_{j,k}\le d_{j,k}\le b_{j,k}$ and $a_{j+1,k}\le d_{j,k}\le b_{j+1,k}$, where $1\le j\le r-1$ and $1\le k\le c$.  This ensures that there is no split in the streams.
\end{enumerate} 
Since $h_1,\ldots,h_r$ are fixed constants, the above system with the constraint that the variables must be non-negative can be modeled as a linear program, e.g., see  Figure~\ref{fig:GurobiExample} (left). 
 
\subsection{Variable Row Heights}
\label{vrh}

We model this case by treating  $h_1,\ldots,h_j$ as variables. Hence the objective and constraint functions yield a  quadratically-constrained quadratic program. 
Note that scaling down the height of a StreamTable by some $\delta\in (0,1]$ and scaling up the width by 
\todo{No mention of quadratic constraints. DONE.}
$1/\delta$ do not change the excess area. Therefore, a non-linear program solver may end up generating a final StreamTable with bad aspect ratio. Hence we suggest to add another constraint: $h_1+\ldots+h_k = H$, where $H$ is the desired height of the visualization.  Figure~\ref{fig:GurobiExample} (right) shows an example (not necessarily optimal) solution    computed using a non-linear program solver Gurobi~\cite{GurobiNon-Linear}. 

\begin{figure}[pt]
  \centering
  \begin{subfigure}{.48\textwidth}
    \centering
    \includegraphics[width=1.08\textwidth]{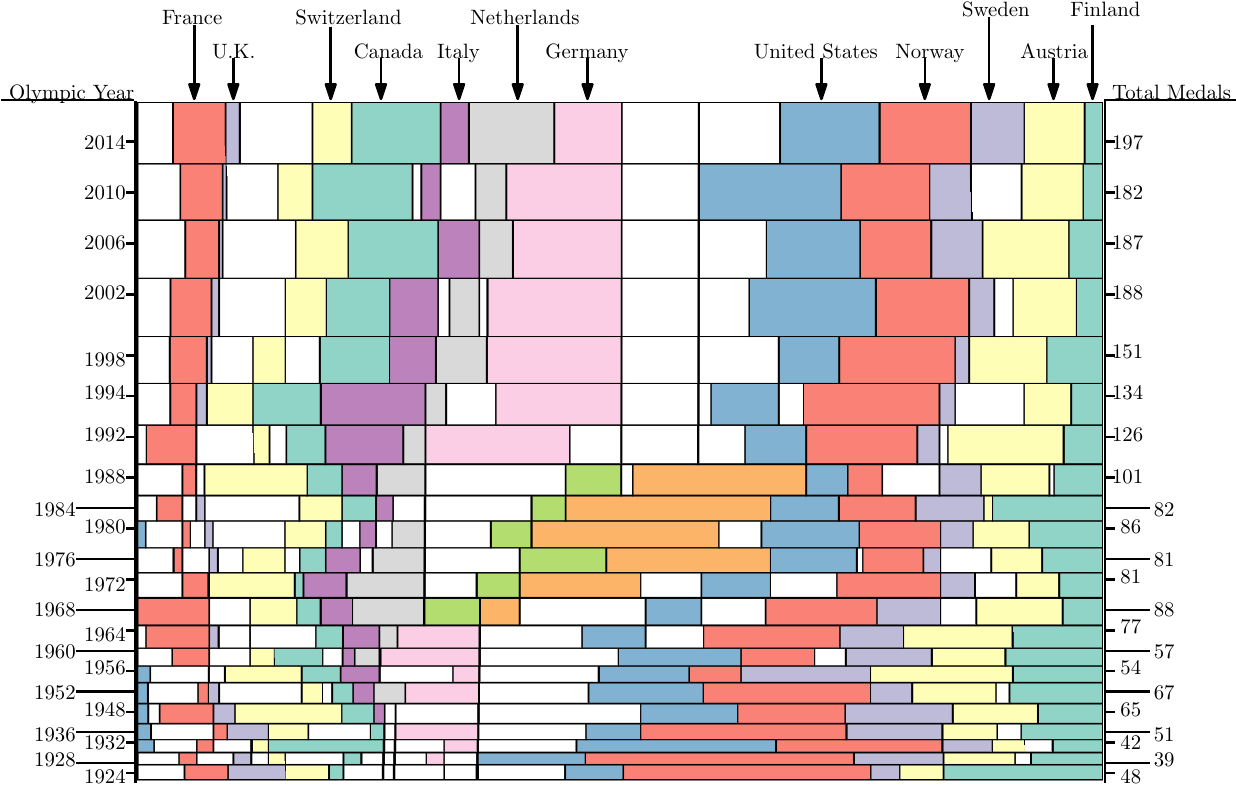}
  \end{subfigure}\hfill
  \begin{subfigure}{.48\textwidth}
    \centering
    \includegraphics[width=\textwidth]{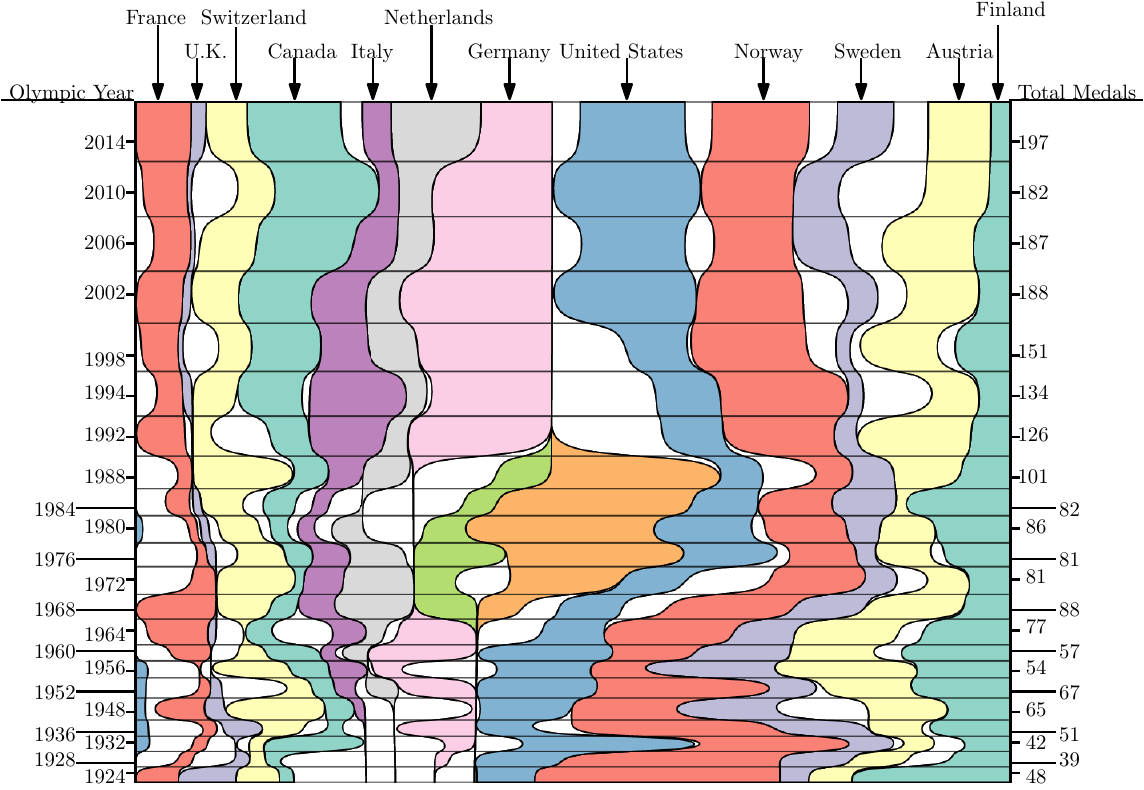}
  \end{subfigure}
\caption{StreamTables of a   Winter Olympics dataset (left) using a linear program with row height  proportional to the row sum, and (right) using Gurobi with a fixed total height and with corner smoothing. 
}
\label{fig:GurobiExample}
\end{figure}


\todo{JARED: we need to stop Gurobi a bit early, the visual difference is barely noticable.}

 
\textbf{Local Improvement:} We now show how a non-optimal StreamTable may be improved further by examining each empty cell individually, while   deciding whether that cell can be removed by shrinking the height of the corresponding row. By $E_{i,j}$ we denote the empty rectangle between the rectangles $R_{i,j}$ and $R_{i,j+1}$. We first refer the reader to Figure~\ref{fig:heuristic}(a)--(b). Assume that we want to decide whether the empty cell $E_{i,j}(=E_{2,4})$ can be removed by scaling down the height of the second row. The idea is to grow  the rectangles to the left (resp., right) of $E_{i,j}$ towards the right (resp., left) respecting the adjacencies and area. 
\begin{figure}[h]
  \centering
  \includegraphics[width=\textwidth]{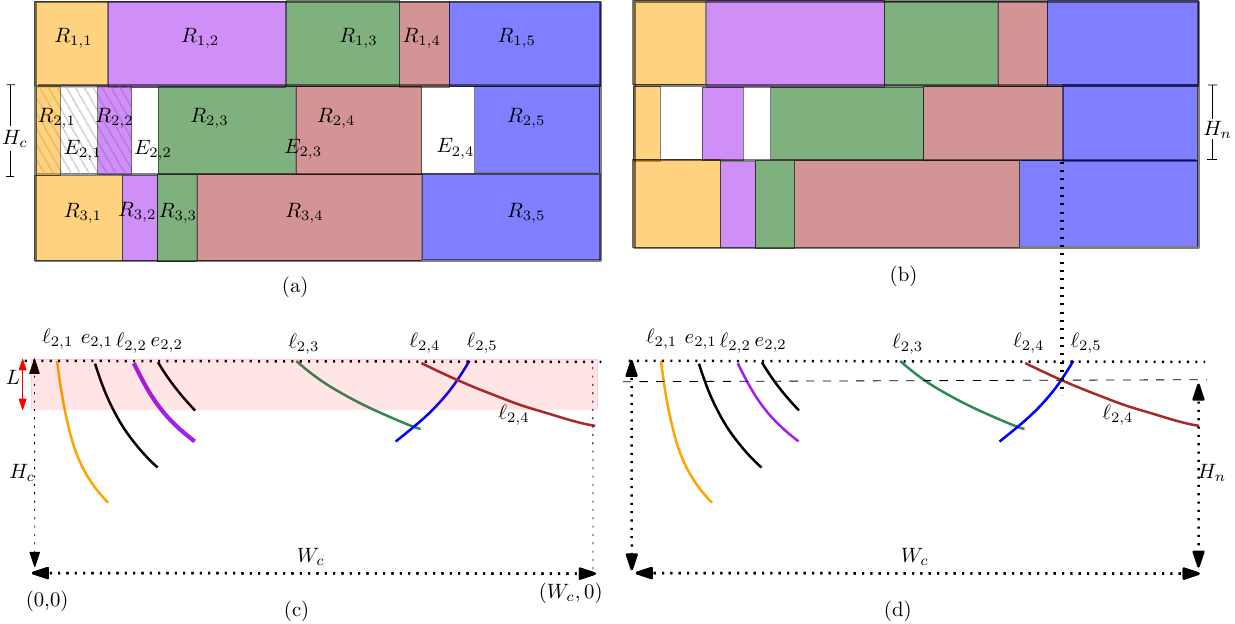}
  \caption{ (a) A StreamTable with width $W_c$ and height $H_c$. (b) Removal of the empty rectangle $E_{2,4}$ (c)--(d) Illustration for computing the new height $H_n$ of the second row.  }
  \label{fig:heuristic}
\end{figure}

Now consider a rectangle $R_{i,k}(=R_{2,2})$ before $E_{i,k}(=E_{2,4})$. Let $G_{i,k}$ be the rectangle determined by the $i$th row with left and right sides coinciding with the left and right sides of $R_{i,1}$ and $R_{i,k}$, respectively. Figure~\ref{fig:heuristic}(a) shows $G_{2,2}$ in a falling pattern. Let $\ell_{i,k}$ be the length of $G_{i,k}$. Let $A_{i,k}$ is the initial area of $G_{i,k}$, and our goal is to keep  this area fixed as we scale down the height of the $i$th row. The height   of $G_{i,k}$ is defined by  $f(\ell_{i,k}) = A_{i,k}/\ell_{i,k}$. Since the rectangles of the $(i-1)$th and $(i+1)$th rows do not move, $f(\ell_{i,k})$ does not split the $(k+1)$th stream as long as $\ell_{i,k}$ is upper  bounded by the right sides of $R_{i-1,k+1}$ and $R_{i+1,k+1}$. Figure~\ref{fig:heuristic}(c) plots these functions, where $H_c$ is the current height of the second row. The height function for $G_{2,2}$ is drawn in thick purple in the interval $[\ell_{2,2},\min\{q_{1,3},q_{3,3}\}]$, where $q_{1,3}$ and $q_{3,3}$ are the right sides of $R_{1,3}$ and $R_{3,3}$, respectively.

We construct such functions also for all the empty rectangles $E_{i,k}$, where $1\le k<j$. These are labelled with $e_{i,k}$. Finally, we construct these functions symmetrically for the rectangles that appears after  $E_{i,j}$. We then find a height $H_n$ by determining the common interval $L$ where all these functions are valid individually (Figure~\ref{fig:heuristic}(c)), and then  determining the first intersection (if any) in this interval, as illustrated in Figure~\ref{fig:heuristic}(d). If no such  intersection point exists, then we can shrink the row by an amount equal to the length of the interval $L$.  

We iterate over the empty rectangles as long as we can find an empty rectangle to improve the solution, or to a maximum number of iterations. However, this only provides a heuristic algorithm, and thus  Problem 1 remains open.

\textbf{Restricting Row Heights to Integers:} 
We now show that if the row heights are restricted to be positive integers, then we   prove the problem to be NP-hard.

\begin{theorem}
\label{thm:integer}
Given a table $T$ and a positive integer $H$, it is NP-hard  to compute a minimum-area no-split StreamTable of height $H$ with row heights as integers respecting the row ordering of $T$.
\end{theorem}
\begin{proof}
We reduce the NP-hard problem \emph{clique}~\cite{Garey1979}, where the input is a graph $G$ and a positive integer $k$ and the goal is to find a set of $k$ vertices that are pairwise adjacent. The problem remains NP-hard even when $1<k<n$.  Given an instance $G$ of the clique problem with $n$ vertices and $m$ edges, we construct a table $T$ with $n$ rows and $m$ columns as follows.

\begin{enumerate}
    \item For each edge $e \in E_{G}$, we create a column  called an \emph{edge column}, and label it by $e$.  
    \item We insert an additional column at the left and right sides of the table and also between every pair of adjacent columns.  We refer to these columns as \textit{line columns}. Each cell of a line column has a weight of $\epsilon = \frac{1}{ n(m+1)}$.  
    \item For each vertex $v \in V_{G}$, we create a row and assign it the label $v$. For each edge column, we assign each cell a weight $6$. 
    \item We now partition each cell $T_{v,e}$ into two cells (Figure~\ref{fig:integer}), as follows. 
    \begin{enumerate}
        \item If vertex $v$ is an end point of edge $e$, then the weight of the left and right cells are  $2$    and  $4$, respectively. We refer to these as a  \emph{$(2,4)$-group}.
        \item Otherwise, the weight of the left   and right cells are  $2$    and $2$, respectively. We refer to these as a  \emph{$(2,2)$-group}.
    \end{enumerate}
\end{enumerate}
It now suffices to show that $G$ admits a clique of size $k$ if and only if there exists a no-split StreamTable of height $H=(n+k)$ and width at most $  (6m - 2{k\choose 2})+(m+1)\epsilon$, 
where the row heights are integers.

Assume first that $G$ has a clique $C$ of size $k$. We then draw the StreamTable such that each row corresponding to the vertices of $C$ has a height of $2$ and every other row has a height of $1$   (Figure~\ref{fig:integer}). For each edge column $e=(v,w)$, there can now be two cases: (A) If $v\in C$ and $w\in C$, then $T_{v,e}$ and $T_{w,e}$ will be $(2,4)$ groups and all the other cells in this edge column are $(2,2)$ groups. Hence this edge column can be drawn with a width of $4$ units. (B) If at least one of $v$ and $w$ are not in $C$, then without loss of generality assume $w\not\in C$. Since $T_{w,e}$ is a $(2,4)$-group and the row height for $w$ is 1, we can draw the cells in  $T_{w,e}$ using a width of 6 units. It is straightforward to draw the remaining cells of this edge column within the same width. 

Since we have $k$ mutually adjacent vertices, we will have ${k \choose 2}$ edge columns with width $4$ and $(m-{k \choose 2})$ edge columns with width $6$. Thus the total width is at most $6(m-{k \choose 2})+4{k \choose 2} = (6m - 2{k\choose 2})$. Together with the line columns the width is at most $(6m - 2{k\choose 2})+(m+1)\epsilon$.
 \begin{figure}[pt]
    \centering
    \includegraphics[width=.8\textwidth]{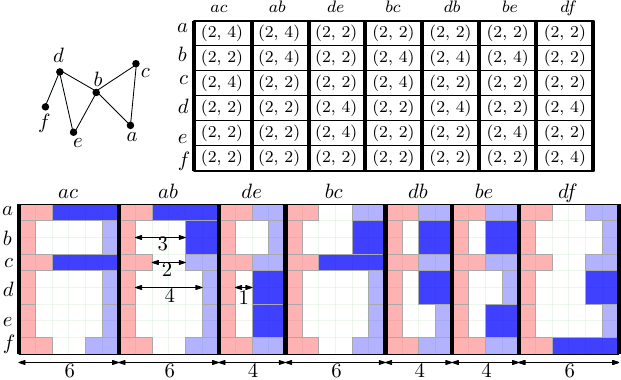}
    \caption{Illustration for the proof of Theorem~\ref{thm:integer}. Given a clique $\{b,e,d\}$, one can construct a StremTable with ${3\choose 2}$ edge  columns of width $4$, where the remaining edge columns are of width 6. The line columns are shown in thick vertical lines. }
    \label{fig:integer}
\end{figure}

Assume now that there exists a StreamTable for $T$ with height $(n+k)$ and width $(6m - 2{k\choose 2})+(m+1)\epsilon$, where all the row heights are integers. 

First assume that there are $k$ rows of height 2 or more. We now show that the corresponding vertices are mutually adjacent in $G$. We will use the notation $C'$ to denote these set of vertices.

Suppose for a contradiction that $C'$ does not determine  a clique. Let  $e=(v,w)$ be an edge column. Assume first that  $C'$ does not include both $v$ and $w$. Without loss of generality assume that $w\not \in C'$. Since the corresponding row is of height 1, the width required for the corresponding cell $T_{w,e}$ is at least 6 units. Assume now that  $C'$ includes both $v$ and $w$. Since  $k<n$, there must be at least one row of height 1. Let $z$ be the corresponding vertex. the width required for the corresponding cell $T_{z,e}$ is at least 4 units. If $C'$ does not correspond to a clique, then the total width of the table must be at least $6(m-m')+4m' - n(m+1)\epsilon$, where $m'$ is the number of edges of $G$ with both endpoints in $C'$ and the negative term $n(m+1)\epsilon$ compensate for the shift that may occur due to the line columns.

Since $\epsilon = \frac{1}{ n(m+1)}$, the width is at least  $6(m-m')+4m' - 1 = 6m-2m'-1$. Since $m'$ is smaller than ${k\choose 2}$, then $6m-2m'-1> 6m-2({k\choose 2}-1)-1 = 6m-2{k\choose 2}+1$. Since $ (m+1)\epsilon = \frac{1}{ n}$, we have $6m-2{k\choose 2}+1 > 6m-2{k\choose 2} + (m+1)\epsilon$. We thus reach a contradiction to our initial assumption on the width of $T$.

Assume now that there are less than $k$ rows of height 2 or more. Then the total width is at least  $6(m-m')+4m' - n(m+1)\epsilon$, where $m'$ is the number of edges of $G$ with both endpoints in $C'$. Since $m'$ is smaller than ${k\choose 2}$, we can apply the same argument as above  to reach a contradiction.
\end{proof}

\textbf{Geometric Programming:} 
\label{appgeom} 
Here we present a discussion on geometric programming that partially models the problem of Section~\ref{vrh}, which is inspired by the work on area minimization problem in floorplanning~\cite{DBLP:conf/ispd/ChenF98,DBLP:journals/mmor/Rosenberg89}. We show that if the cells are allowed to have an area larger than their corresponding weights, then the problem can be modelled using a geometric programming.  In particular, the optimization function and constraints of Section~\ref{sec:lp} can be expressed as follows:

\begin{equation*}
\begin{array}{lll@{}lll}
\text{minimize}  & \sum_{j=1}^r h_j\cdot b_{j,c} &\\
\text{s. t.}& a_{j,1} a_{j+1,1}^{-1} = 1, & \text{ }j=1 ,\dots, r-1 & \text{ }(C'_1)\\
                 & b_{j,c}b_{j+1,c}^{-1}=1, &\text{ }j=1 ,\dots, r-1 & \text{ }(C'_2)\\ 
                 & w_{j,k}h_j^{-1}b_{j,k}^{-1}  + a_{j,k}b_{j,k}^{-1} \le 1, &\text{ }j=1 ,\dots, r;\text{ } k=1 ,\dots, c  & \text{ }(C'_3)\\
                 &a_{j,k}b_{j,k}^{-1}\le 1, & \text{ }j=1 ,\dots, r;\text{ } k=1 ,\dots, c  & \text{ }(C'_4)\\
                 &a_{j,k}b_{j+1,k}^{-1}\le 1, & \text{ }j=1 ,\dots, r-1;\text{ } k=1 ,\dots, c  & \text{ }(C'_5)\\
                 &a_{j+1,k}b_{j,k}^{-1}\le 1, & \text{ }j=1 ,\dots, r-1;\text{ } k=1 ,\dots, c  & \text{ }(C'_6)\\
                 &h_1H^{-1}+\ldots+h_rH^{-1}\le 1, & \text{ }j=1 ,\dots, r;  & \text{ }(C'_7)\\
                 & W^{-1} b_{1,c} \le 1, & \text{ }j=1 ,\dots, r;  & \text{ }(C'_8)\\
                 & a_{j,k},b_{j,k},h_j\ge 0 & \text{ }j=1 ,\dots, r;\text{ } k=1 ,\dots, c  \\
\end{array}
\end{equation*}

Note that the objective function is minimizing the total area of the StreamTable, which is a posynomial function. In Section~\ref{sec:lp}, we discussed three constraints $C_1$--$C_3$. The constraints $C'_1$ and $C'_2$ follow from the constraint $C_1$, which has now  been rewritten with monomials.  The constraints $C'_3$ is similar to the constraint $C'_2$ except that instead of strict equality, we now have $h_jb_{j,k}-h_ja_{j,k} \ge w_{j,k}$, which has been rewritten with a posynomial function.  The constraints $C'_4$--$C'_6$ can be obtained from $C_3$, which has now been rewritten with monomials. Finally, the constraints $C'_7$ and $C'_8$ are added to keep the height and width bounded by $W$ and $H$, respectively.

This is well-known that a geometric programming can be transformed to a nonlinear but convex optimization problem via a logarithmic transformation of the objective and constraint functions~\cite{bb}. Although we have the constraints $C'_4$--$C'_6$ to ensure that the streams are connected, the cells themselves are now allowed to have an area larger than  their corresponding weights. Shrinking the cells to realize their original weights (by creating rectangular gaps) does not guarantee the connectedness of the streams. Hence the original problem remains open.

\section{StreamTable (Uniform Row Heights, Variable Row Order)}

We now show that computing StreamTables with no split (resp., minimum excess area) while minimizing the excess area (resp., number of splits) by reordering the rows  is NP-hard.

\begin{theorem}
\label{thm:ma}
Given a table $T$ and a non-zero positive number $\delta>0$, it is NP-hard to compute a StreamTable  with no split and minimum excess area, where each row is of height $\delta$ and the ordering of the rows can be chosen.
\end{theorem}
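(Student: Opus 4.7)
The plan is a polynomial-time reduction from the strongly NP-hard 3-Partition problem, exploiting the fact that 3-Partition remains hard under unary encoding, so the cell weights in the constructed table may be polynomial in the input values. Recall that a 3-Partition instance consists of a bound $B$ and $3m$ positive integers $a_1,\ldots,a_{3m}$ satisfying $B/4 < a_i < B/2$ and $\sum_i a_i = mB$, and one must decide whether the $a_i$ can be split into $m$ triples each summing to $B$. The decision version of our problem that I would target is: ``does there exist a row ordering whose minimum-excess, no-split StreamTable has excess at most $X$?'', for a suitably chosen threshold $X$.

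From an instance I would construct a table $T$ with uniform row height $\delta$ and a small constant number of columns, built from two row families. The \emph{item rows} $I_1,\ldots,I_{3m}$, one per input number $a_i$, each devote a dedicated ``payload'' column to a cell of width proportional to $a_i$, with small filler cells elsewhere. The \emph{separator rows} $S_0,\ldots,S_m$ are $m+1$ identically shaped rows whose payload cell has width exactly $B$, flanked by ``anchor'' cells in the outer columns that are wide enough that property $P_1$, together with the uniform row height, pins the payload cell of every separator row to the same horizontal interval $[L,L+B]$ within the bounding box. The no-split constraint in the payload column then forces the payload cell of any item row placed between two separators to lie inside $[L,L+B]$. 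Combined with the bound $B/4 < a_i < B/2$, this means that between any two consecutive separators there can be at most three item rows, and zero excess inside that gap is achieved precisely when exactly three item rows are placed there and their payload widths sum to $B$.

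I would then establish both reduction directions by comparing against a baseline excess determined solely by the separator rows. If the 3-Partition instance is a yes-instance, ordering the rows as $S_0$, the three items of the first triple, $S_1$, the three items of the next triple, and so on, attains exactly the baseline excess with no splits. Conversely, if some row ordering attains an excess at most the baseline, then every separator gap must contain exactly three items whose payload widths sum to $B$, yielding a valid 3-partition. To make the gap strict, I would introduce one or two auxiliary ``witness'' columns in which item rows and separator rows have widths that are mutually incompatible unless the rows appear in the intended ``separator, item, item, item, separator'' pattern; then any deviation, such as two separators being placed adjacently, an item row straddling a separator boundary, or a gap containing two or four items, creates a strictly positive empty region whose area is bounded below by $\delta$ times a positive constant depending only on $\min_i a_i$.

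The main obstacle is gadget calibration. I need the geometric no-split constraint to faithfully translate into the combinatorial ``sum to $B$'' condition, and I need to preclude orderings that exploit cancellations in anchor or filler columns to achieve a lower total excess than the intended baseline. In particular, forbidding two separators from sitting adjacent, forbidding items from bridging separator boundaries, and localizing the 3-Partition arithmetic to the payload column all require careful choices of the widths and number of the auxiliary columns. Once these widths are fixed, the construction has size polynomial in $m + B$, which is polynomial in the unary size of the 3-Partition input, and the decision version of the StreamTable problem is NP-hard as claimed.
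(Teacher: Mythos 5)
There is a genuine gap at the heart of your reduction: the claim that ``zero excess inside a gap is achieved precisely when exactly three item rows are placed there and their payload widths sum to $B$'' does not hold in this model. Since every row is a full horizontal stripe of the bounding box, the empty area contributed by a row equals $\delta$ times ($W$ minus the sum of that row's cell widths), where $W$ is the width of the bounding box; this quantity is independent of where the cells sit inside the row and of the row ordering. So minimizing excess area is exactly minimizing the feasible width $W$, and your construction must make the minimum feasible $W$ depend on the ordering. It does not: the payload rectangles of distinct item rows live in distinct stripes, so they are free to overlap one another in $x$-coordinate, and the no-split condition only asks that \emph{consecutive} payload intervals intersect (the LP constraint $a_{j,k}\le d_{j,k}\le b_{j,k}$, $a_{j+1,k}\le d_{j,k}\le b_{j+1,k}$ is a pairwise overlap condition, not a tiling condition). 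Hence for \emph{any} ordering and \emph{any} number of item rows between two separators, you can center all their payloads (each of width $<B/2$) at the midpoint of the pinned interval $[L,L+B]$: all consecutive payloads overlap, the stream is connected, and $W$ stays at the baseline $A_L+B+A_R$. The minimum excess is therefore the same for yes- and no-instances of 3-Partition, and the reduction fails. Your proposed ``witness'' columns cannot repair this, because any column whose cells have ordering-independent widths suffers from the same problem: there is no mechanism forcing payloads in different rows to be pairwise disjoint, which is what a 3-Partition-style packing argument needs.

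The paper sidesteps exactly this difficulty by reducing from \emph{betweenness} rather than a packing problem, and by splitting each wide cell into \emph{three} sub-cells whose widths sum to the same total $w$ in every row but are distributed differently according to the row's role in the triple ($\frac{2w}{3},\frac{w}{6},\frac{w}{6}$ for the left element, $\frac{w}{6},\frac{w}{6},\frac{2w}{3}$ for the right, $\frac{w}{6},\frac{2w}{3},\frac{w}{6}$ for the middle, and $\frac{5w}{12},\frac{w}{6},\frac{5w}{12}$ otherwise). What varies with the ordering is not how much material a row contains but whether three \emph{interleaved} connected sub-streams can coexist within width $w+\frac{w}{12}$; if the middle element does not lie between the two outer ones, the three sub-streams provably need width at least $w+\frac{w}{6}$. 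That interleaving of several mutually constraining streams inside a fixed horizontal budget is the ingredient your single-payload-column construction is missing; if you want to salvage a 3-Partition-flavored reduction you would need an analogous multi-stream gadget whose required width genuinely changes with the vertical order of the rows.
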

\begin{proof}
We reduce the NP-complete problem \emph{betweenness}~\cite{DBLP:journals/siamcomp/Opatrny79}, where the input is a set of ordered triples of elements, and the problem is to decide whether there exists a total order $\sigma$ of these  elements, with the property that for each given triple, the middle element in the triple appears somewhere in $\sigma$  between the other two elements.

Let $S$ be a set of $c$ integer triples (an instance of betweenness) over $r$ elements (integers), where $r,c\ge 5$. We now construct an $r\times (4c+1)$ table $T$  (Figure~\ref{fig:x}(a)), as follows: 

\begin{enumerate}
    \item For every triple $t \in S$, we make a column  (labelled with $t$). We refer to these columns as  \textit{triple columns}. Each of these columns will later be split into three more columns. For every element $e$, we create a row (labelled with $e$). Assume that each cell of a triple column has a weight of $w (=15)$. 
    \item We insert an additional column at the left and right sides of the table and also between every pair of adjacent triple columns.  We refer to these columns as \textit{line columns}. Each cell of a line column has a weight of $\epsilon = \frac{1}{ r(c+1)}$.  
    \item For every triple $t$ and row $i$, we further partition the cell $(t,i)$ into three cells and distribute the weight $w$ among them, as follows:
    \begin{enumerate}
        \item[4a] If $i$ is the left element of $t$, then the  weight of the left, middle and right cells are  $\frac{2w}{3}, \frac{w}{6}$ and $\frac{w}{6}$, respectively.
        \item[4b] If $i$ is the right element of $t$, then the  weight of the left, middle and right cells are  $\frac{w}{6}, \frac{w}{6}$ and $\frac{2w}{3}$, respectively.
        \item[4c] If $i$ is the centre element of $t$, then the  weight of the left, middle and right cells are  $\frac{w}{6}, \frac{2w}{3}$ and $\frac{w}{6}$, respectively. 
        \item[4d] Finally, if $i$ does not belong to $t$, then the  weight of the left, middle and right cells are  $\frac{5w}{12}, \frac{w}{6}$ and $\frac{5w}{12}$, respectively.  
    \end{enumerate}
\end{enumerate}

   
\begin{figure}[h]
    \centering
    \includegraphics[width=\textwidth]{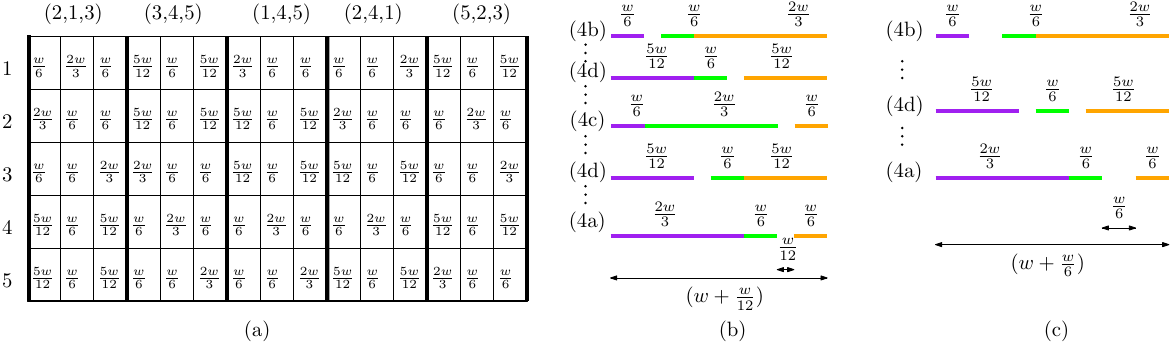}
    \caption{(a) A table $T$ obtained from a set of triples $\{(2,1,3),$ $ (3,4,5),$ $ (1,4,5), (2,4,1),(5,2,3)\}$. Here the thick black lines represent the line columns. (b)--(c) Illustration for the required width for different row orderings.}
    \label{fig:x}
\end{figure}


We set $\delta$ to be 1. It now suffices to show that  the betweenness instance $S$ admits a total order $\sigma$, if and only if there exists a StreamTable with no split and at most $\frac{rcw}{12}$ excess area, where each row is of height $\delta$. 

First assume that $S$ admits a total order $\sigma$. 
We draw the rectangles of each line column on top of each other (vertically aligned) and allocate a width of  $(w+\frac{w}{12})$  for the triple columns. Since we order the rows by $\sigma$, a pair of rows that satisfy conditions (4a) and (4b) for a triple $t$ must have a row $k$ satisfying condition (4c) for $t$. Therefore, we can complete the drawing of the rectangles of the three streams within the allocated width without any split. Figure~\ref{fig:x}(b) illustrates a schematic representation of the rows for this scenario. Since $\delta = 1$, the excess area is at most $ \frac{rcw}{12}$ in total. Figure~\ref{fig:y} illustrates the construction for the table from Figure~\ref{fig:x}(a).
\begin{figure}[h]
    \centering
    \includegraphics[width=.6\textwidth]{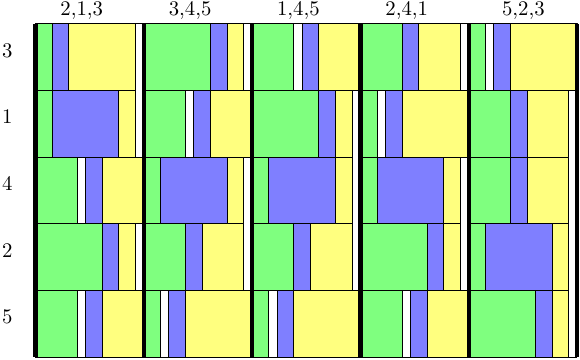}
    \caption{A StreamTable for the $T$, where $\sigma = \{3,1,4,2,5\}$.}
    \label{fig:y}
\end{figure}

We now show that if there is a StreamTable for $T$ with  at most $\frac{rcw}{12}$ excess area, then the corresponding row ordering will yield the total order for the betweenness instance. Any three streams corresponding to a triple $t$ must have exactly one $(4a)$, one $(4b)$ and one $(4c)$ conditions. Suppose for a contradiction that for some triple $t$, the condition $(4c)$ does not appear between conditions $(4a)$ and $(4b)$ (Figure~\ref{fig:x}(c)). Then these streams would require a width of at least $(w+\frac{w}{6})$. Let $t'$ be a triple that appears immediately after $t$. Since the stream (line column) between $t$ and $t'$ is very narrow, they cannot share a  width of more than $r\epsilon$. By construction, for each triple, we have three streams and they require a width of at least $(w+\frac{w}{12})$. Hence the total width of the visualization must be at least
\begin{align*}
 & (w+\frac{w}{6} ) + (c-1)(w+\frac{w}{12}) - r(c+1)\epsilon\\
&= cw + \frac{cw}{12} + \frac{w}{12} -  r(c+1)\epsilon\\
&= cw + \frac{cw}{12}+ \frac{w}{12} -1.
\end{align*}
Since the sum of the weights of $T$ is $(rcw+r(c+1)\epsilon) = (rcw+1)$, this implies an excess area of  larger than 
$\frac{rcw}{12} + \frac{rw}{12}-r -1> \frac{rcw}{12}$, which contradicts our assumption (when $w\ge 15$) that the excess area is at most $\frac{rcw}{12}$.  
 \end{proof}
 
 
\begin{theorem}
\label{thm:ms}
Given a table $T$ and a non-zero positive number $\delta>0$, it is NP-hard to compute a StreamTable  with zero excess area and minimum number of splits, where each row is of height $\delta$ and the ordering of the rows can be chosen.
\end{theorem}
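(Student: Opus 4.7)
The plan is to reduce from the NP-complete Hamiltonian Path problem rather than from Betweenness, by re-calibrating the sub-cell weights of the triple columns in the gadget of Theorem~\ref{thm:ma} so that zero excess area holds automatically and so that splits encode exactly the non-edges of the input graph that are realised as adjacencies in the row permutation.

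Given a graph $G=(V,E)$, I would build a table $T$ with one row per vertex and one triple column per non-edge $\{u,v\}$ of $G$, separated by narrow line columns as in the proof of Theorem~\ref{thm:ma}. For each non-edge I pick an arbitrary third vertex $x\in V\setminus\{u,v\}$ and create the triple $(u,x,v)$; thus row $u$ is type~(4a), row $v$ is type~(4b), row $x$ is type~(4c), and every other row is type~(4d) in this column. I choose sub-cell weights summing to $w+\frac{w}{12}$ per row: $(L_a,M_a,R_a)=(\frac{3w}{4},\frac{w}{6},\frac{w}{6})$, $(L_b,M_b,R_b)=(\frac{w}{6},\frac{w}{6},\frac{3w}{4})$, $(L_c,M_c,R_c)=(\frac{w}{6},\frac{3w}{4},\frac{w}{6})$, and the tuned value $(L_d,M_d,R_d)=(\frac{w}{4},\frac{7w}{12},\frac{w}{4})$. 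Because every row has the same total weight, fixing the bounding rectangle to width $W=c(w+\frac{w}{12})+(c+1)\epsilon$ and height $r\delta$ makes zero excess area hold automatically for every permutation.

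Next I would show that the total number of splits equals exactly the number of triples whose outer pair $\{a_t,c_t\}$ is a $\sigma$-adjacency. Under the tight packing the left and right sub-streams of each triple column share a common triple-column boundary across all rows, so only the middle sub-streams can split. A direct arithmetic check of the four middle intervals $[\frac{9w}{12},\frac{11w}{12}]$, $[\frac{2w}{12},\frac{4w}{12}]$, $[\frac{2w}{12},\frac{11w}{12}]$, $[\frac{3w}{12},\frac{10w}{12}]$ for types~(4a)--(4d) shows that the tuned $[\frac{3w}{12},\frac{10w}{12}]$ interval of~(4d) meets each of~(4a), (4b), (4c), while $[\frac{9w}{12},\frac{11w}{12}]$ and $[\frac{2w}{12},\frac{4w}{12}]$ are the unique disjoint pair. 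Consequently every transition whose row-pair is not $\{a_t,c_t\}$ leaves the middle sub-stream of triple~$t$ connected, whereas the transition $\{a_t,c_t\}$ contributes exactly one split to~$t$; summing over triples gives the claimed formula.

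From this characterisation the reduction is immediate: a StreamTable with zero excess area and zero splits exists iff there is a linear order of $V$ whose consecutive pairs are all edges of~$G$, i.e., iff $G$ has a Hamiltonian path. Hence deciding whether the minimum split count (under zero excess) is at most $K=0$ is NP-hard. The hardest part of this plan is the weight calibration, i.e., choosing type-(4d) sub-weights that simultaneously (i)~sum to $w+\frac{w}{12}$, so that row sums remain equal and zero excess stays feasible for every permutation, and (ii)~yield a middle interval overlapping each of (4a), (4b), and (4c) while (4a) and~(4b) stay disjoint, so that genuine outer-pair adjacencies are the only source of splits. Once this arithmetic check has been verified, the combinatorial translation from Hamiltonian Path completes the NP-hardness proof.
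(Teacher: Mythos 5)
Your reduction is correct, but it takes a genuinely different route from the paper. The paper reduces from Hamiltonian path in \emph{cubic} graphs and builds one triple-column per \emph{edge} of $G$, with only two sub-cell patterns (an ``$L$ group'' $(\frac{7}{12}w,\frac{1}{12}w,\frac{4}{12}w)$ for rows that are endpoints of the edge and an ``$R$ group'' $(\frac{4}{12}w,\frac{1}{12}w,\frac{7}{12}w)$ otherwise); a split occurs in an edge column exactly when one of two consecutive rows is an endpoint and the other is not, and the cubic degree bound makes the count uniform: each Hamiltonian-path adjacency contributes exactly $4$ splits while each non-adjacency contributes $6$, so the threshold is $4(n-1)$. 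You instead reduce from general Hamiltonian path, build one column per \emph{non-edge}, recycle the four-pattern betweenness gadget of Theorem~\ref{thm:ma} with recalibrated weights so that all row sums are equal, and arrange that the unique disjoint pair of middle intervals is (4a)--(4b); hence the total split count is exactly the number of consecutive row pairs that are non-edges, and the threshold is $0$. Both arguments rest on the same two structural facts, which you correctly identify and verify: equal row sums plus zero excess area force the tight left-to-right packing for every permutation (so positions are determined and only split counting remains), and the shared column boundaries mean only middle sub-streams can ever split. Your version buys a cleaner correctness argument (no per-pair split accounting, no need for the cubic restriction) and a slightly stronger conclusion, namely that even deciding whether a zero-excess-area, \emph{zero}-split StreamTable exists is NP-hard; the paper's version buys a smaller table ($\Theta(n)$ columns for cubic graphs versus your $\Theta(n^2)$ non-edge columns) and hardness of the gap between $4(n-1)$ and $4n-2$ splits. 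Two cosmetic remarks: the line columns are not actually needed in your construction (they only mattered in Theorem~\ref{thm:ma} to stop adjacent triple columns from sharing width, whereas here the packing is already forced), and if one insists on the integer-weight requirement in the definition of a table, the fractional weights should be scaled up by a common denominator --- the same caveat applies to the paper's own construction.
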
 
\begin{proof}
We reduce the NP-complete problem \emph{Hamiltonian path in a cubic graph}~\cite{DBLP:journals/siamcomp/GareyJT76}, where the input is a graph $G$ with $n$ vertices and $m$ edges such that every vertex is of degree 3, and  the problem is to decide whether there exists a total order  of the vertices that determines a Hamiltonian path in $G$. 
\begin{figure}[h]
    \centering
    \includegraphics[width=.75\textwidth]{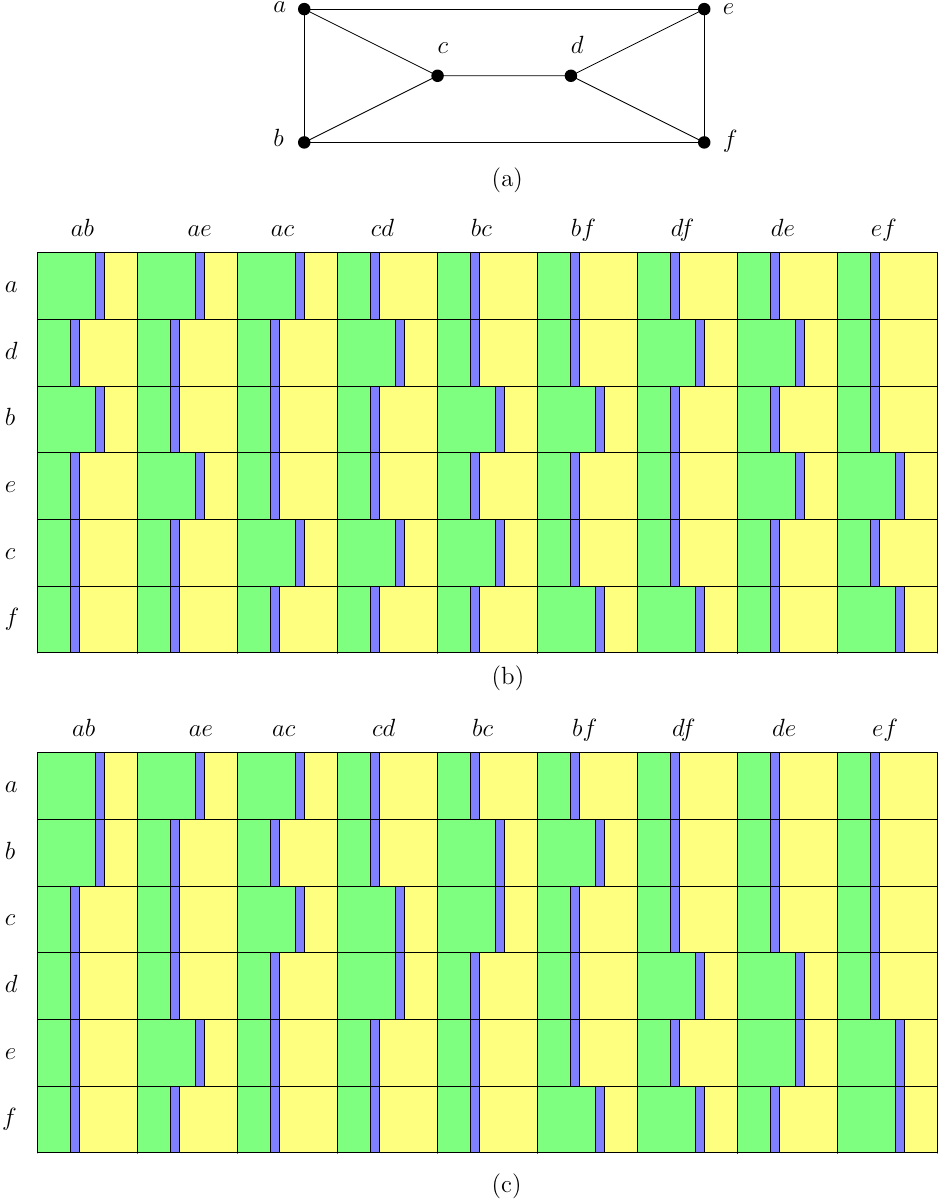}
    \caption{(a) A cubic graph $G$. (b) A visual representation of the table $T$ corresponding to $G$. (c) Construction of a StreamTable from a given Hamiltonian path $a,b,c,d,e,f$.}
    \label{fig:z}
\end{figure}

 Let $G$ be a graph with $n$ vertices, i.e., an instance of the Hamiltonian path problem. We now construct a table $T$, as follows. 

\begin{enumerate}
    \item For each edge $e \in E_{G}$, we create a column  called an \emph{edge column}, and label it by $e$.   
    \item For each vertex $v \in V_{G}$, we create a row and assign it the name $v$. For each edge column, we assign each cell a weight $w$.
    \item We now partition each cell $T_{v,e}$ into three cells (Figure~\ref{fig:z}(a)--(b)), as follows. 
    \begin{enumerate}
        \item If vertex $v$ is an end point of edge $e$, then the weight of the left, middle and right cells are  $\frac{7}{12}w$, $\frac{1}{12}w$, and  $\frac{4}{12}w$, respectively. We refer to these as an \emph{$L$ group}.
        \item Otherwise, the weight of the left, middle and right cells are  $\frac{4}{12}w$,  $\frac{1}{12}w$, and $\frac{7}{12}w$, respectively. We refer to these as an \emph{$R$ group}.
    \end{enumerate}
\end{enumerate}


It now suffices to show that $G$ admits a Hamiltonian path if and only if there exists a StreamTable for $T$ with zero excess area and at most $4(n-1)$ splits, where the height of each row is $\delta=1$.

Assume first that $G$ has a Hamiltonian path $P$. We then draw the StreamTable such that each row has a height of $\delta$, each row is drawn in the order of the Hamiltonian path, and the cells within each row are drawn consecutively without leaving any gap in between  (Figure~\ref{fig:z}(c)).  By construction of the StreamTable, for every pair of vertices $v,w$ which are adjacent in $P$, the corresponding rows will be consecutive in the StreamTable. For each edge column $e$, there can now be three cases: (A) If $e=(v,w)$, then $T_{v,e}$ and $T_{w,e}$ will consist of $L$ groups and hence no split will appear. (B) If neither $v$ nor $w$ is an endpoint of $e$, then $T_{v,e}$ and $T_{w,e}$ will consist of $R$ groups and hence no split will appear. (C) Otherwise, one of $T_{v,e}$ and $T_{w,e}$ will represent an $L$ group and the other is an $R$ group, and hence a split will appear. Therefore, the number of splits contributed by the rows representing $v$ and $w$ is $4$. Thus the number of splits overall is $4(n-1)$. 

Assume now that there exists a StreamTable for $T$ with zero excess area and at most $4(n-1)$ splits, where each row is of height $\delta$. We now show that the row ordering in the StreamTable  determines a Hamiltonian path in $G$. Suppose for a contradiction that there exists a consecutive pair of rows in the StreamTable where the corresponding vertices are not adjacent in $G$. Let $a,b$ be such a pair of vertices. Then every  $L$ group of $a$ will occupy a horizontal interval shared with an $R$ group of $b$, and vice versa. Hence this would contribute to at least $6$ splits. Since any pair of consecutive rows must contribute to at least 4 splits, the number of splits will be at least $6+4(n-2) = 4n-2 > 4(n-1)$. 
\end{proof}

\section{Conclusion}
In this paper we have introduced StreamTable, which is an area proportional visualization inspired by streamgraphs. We formulated  algorithmic problems that need to be tackled to produce  aesthetic StreamTables and examined two aesthetic criteria -- excess area and number of splits.  

We have showed that if row heights and row ordering are given, then a StreamTable with no split and minimum area can be computed via a linear program. However, the case when the row ordering is given but the row heights can be chosen needs further investigation. We only provided a quadratically-constrained quadratic program to model the problem and an NP-hardness proof when the row heights are constrained  to be integers. However the original question remains open.

\smallskip
\noindent\textbf{Open Problem 1:} Given a table $T$ and a positive integer $H$, does there exist a polynomial-time algorithm to compute a minimum-area no-split StreamTable of height $H$ that respects the row ordering of $T$? 
\smallskip

We also showed that if the row ordering can be chosen, then the problem of finding a minimum-area or a minimum-split StreamTable is NP-hard. In this setting, it would be interesting to find algorithms for computing   zero-excess-area (resp., no split) StreamTables with good approximation on the number of splits (resp.,  excess area). 

\smallskip
\noindent\textbf{Open Problem 2:} Design polynomial-time algorithms to find good approximation for StreamTable aesthetics (excess area or number of splits) in both the fixed and variable row ordering settings. 
\smallskip

 Recently a framework for $\exists\mathbb {R}$-completeness of packing problems has been proposed in ~\cite{DBLP:conf/focs/AbrahamsenMS20}. It would be interesting to investigate   $\exists\mathbb {R}$-completeness in this context, where the rows need to be packed inside a rectangle maintaining column adjacencies.
\bibliography{CONFERENCE/eurocg21}
\end{document}